\documentclass{llncs}
\pdfoutput=1 
\usepackage{algorithmicx}
\usepackage{algorithm}
\usepackage{algpseudocode}
\usepackage{changebar}
\usepackage{graphicx}
\usepackage[english]{babel}

\usepackage{amsthm}
\usepackage{amsmath,amssymb}
\usepackage{xcolor}
\usepackage[thicklines]{cancel}
\usepackage{stfloats} 
\usepackage{subfigure}
\usepackage{eqparbox}
\usepackage{float}
\usepackage{graphicx,url}
\usepackage{todonotes}
\usepackage{version}
\usepackage{setspace}	
\usepackage{array}
\usepackage{pslatex}
\usepackage{color}
\usepackage{epstopdf}
\usepackage{comment}
\usepackage{cite}
\usepackage{apalike}
\usepackage{fixltx2e}
\setcounter{secnumdepth}{4}
\usepackage{pgfplots}
\usepackage{tikz}
\usetikzlibrary{arrows, positioning, calc}
\setcounter{secnumdepth}{4}
\newtheorem*{theorem*}{Theorem}

\newtheorem*{partialsuppression}{Partial Suppression}

\begin{document}
	
	\title{Safe Disassociation of Set-Valued Datasets}
		
	\author{Nancy Awad\inst{1,2}\and 
		Bechara AL Bouna\inst{1} \and 
		Jean-Francois Couchot\inst{2} \and 
		Laurent Philippe\inst{2}}
	
	\institute{
		TICKET Lab., Antonine University, Hadat-Baabda, Lebanon.\\
		\email{nancy.awad,bechara.albouna@ua.edu.lb}
		\and
		FEMTO-ST Institute, UMR 6174 CNRS, Universit\'{e} of Bourgogne Franche-Comt\'e, France. \\
		\email{jean-francois.couchot, laurent.philippe@univ-fcomte.fr}
	}

	\maketitle 
	
	%ABSTRACT===========================================================================
	\abstract{
		
		Disassociation introduced by Terrovitis \textit{et al.} is a bucketization based anonimyzation technique that divides a set-valued dataset into several clusters to hide the link between individuals and their complete set of items. It increases the utility of the anonymized dataset, but on the other side, it raises many privacy concerns, one in particular, is when the items are tightly coupled to form what is called, a cover problem. 
		In this paper, we present safe disassociation, a technique that relies on partial-suppression, to overcome the aforementioned privacy breach encountered when disassociating set-valued datasets. Safe disassociation allows the $k^m$-anonymity privacy constraint to be extended to a bucketized dataset and copes with the cover problem. We describe our algorithm that achieves the safe disassociation and we provide a set of experiments to demonstrate its efficiency. 
	}
	\begin{keywords}
		Disassociation, cover problem, data privacy, set-valued, privacy preserving
	\end{keywords}
	
	\section{Introduction} \label{sec:introduction}
	Privacy preservation is a key concern in data publishing where individual's personal information must remain protected under all circumstances. This sounds straightforward, but it is somehow difficult to achieve. The AOL search data leak in 2006 \cite{aol2006} is an explicit example that shows the consequences of a unsupervised data publishing. The query logs of 650k individuals were released after omitting all explicit identifiers. They were later withdrawn due to multiple reports of attackers linking individuals to their sensitive records. Alternatively, providing a "complete" privacy over the data requires sacrifices in terms of utility, in other words, usefulness of the data \cite{kanon_defn,kanon_sweeney,ldiversity,anatomy,Dwork2006}.  
Hence, it is pointless to publish datasets that do not provide valuable information. A suitable trade-off between data utility and privacy must be achieved. The point is to provide not only a value anonymization technique, but instead a dataset anonymization technique, that hides/anonymizes the link between individuals and their sensitive information, and, at the same time, keeps the dataset useful for analysis.
When publishing a set-valued dataset (e.g., shopping and search items) it is important to pay attention to attackers that try intentionally to link individuals to their sensitive information. These attackers may be able to single out an individual's complete record/itemset by associating data items from the dataset to their background knowledge.
The example in Figure~\ref{fig:covpb:original} shows a set-valued dataset $\mathcal{T}$ consisting of 6 records $r_1$,\ldots,$r_6$, which are itemsets linked to individuals $1$, \ldots, $6$ respectively. For instance, $r_1:\{a, e\}$ can be interpreted as individual $1$ has searched for item $a$ and item $e$. If an attacker knows that individual $1$ has searched for items $d$ and $e$, he/she will be able to link $1$ to his record $r_2$.

\begin{figure}[ht]
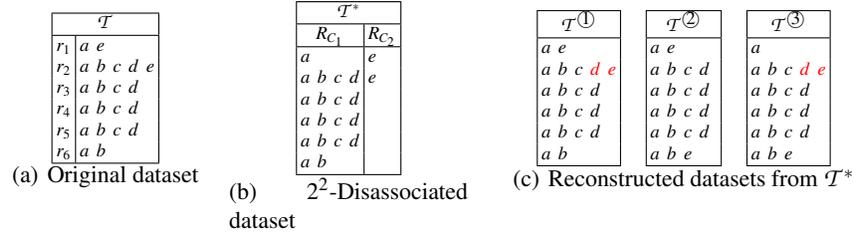

	\begin{center}
		\begin{scriptsize}
			\subfigure[Original dataset]{
				\begin{minipage}{0.25\textwidth}
					\begin{center}
						\[
						\begin{array}{|l|lllll|}
						\hline
						\multicolumn{6}{|c|}{\mathcal{T}}\\
						\hline
						r_1 & a & e &   &   &  \\
						r_2 & a & b & c & d & e \\
						r_3 & a & b & c & d & \\
						r_4 & a & b & c & d & \\
						r_5 & a & b & c & d & \\
						r_6 & a & b &   &   & \\
						\hline
						\end{array}
						\]  
					\end{center}
				\end{minipage}
				\label{fig:covpb:original}
			}
			\subfigure[$2^2$-Disassociated dataset]{
				\begin{minipage}{0.25\textwidth}
					\begin{center}
						\[
						\begin{array}{|llll|l|}
						\hline
						\multicolumn{5}{|c|}{\mathcal{T}^*}\\
						\hline
						\multicolumn{4}{|c|}{R_{C_1}} & {R_{C_2}} \\
						\hline
						a &   &   &   & e \\
						a & b & c & d & e  \\
						a & b & c & d & \\
						a & b & c & d & \\
						a & b & c & d & \\
						a & b &   &   & \\
						\hline
						\end{array}
						\]  
					\end{center}
				\end{minipage}
				\label{fig:covpb:disassosiated}
			}   
			\subfigure[Reconstructed datasets from $\mathcal{T}^*$]{
				\begin{minipage}{0.45\textwidth}
					\begin{center}
						\[
						\begin{array}{|lllll|}
						\hline
						\multicolumn{5}{|c|}{\mathcal{T}^{\textcircled{1}}}\\
						\hline
						a & e &   &   &  \\
						a & b & c & {\color{red}d} & {\color{red}e} \\
						a & b & c & d & \\
						a & b & c & d & \\
						a & b & c & d & \\
						a & b &   &   & \\
						\hline
						\end{array}        
						\,\,\,\,\,\,\,\,\,\,\,\,
						\begin{array}{|lllll|}
						\hline
						\multicolumn{5}{|c|}{\mathcal{T}^{\textcircled{2}}}\\
						\hline
						a & e &   &   &  \\
						a & b & c & d &  \\
						a & b & c & d & \\
						a & b & c & d & \\
						a & b & c & d & \\
						a & b & e &   & \\
						\hline
						\end{array}        
						\,\,\,\,\,\,\,\,\,\,\,\,
						\begin{array}{|lllll|}
						\hline
						\multicolumn{5}{|c|}{\mathcal{T}^{\textcircled{3}}}\\
						\hline
						a &   &   &   &  \\
						a & b & c & {\color{red}d} & {\color{red}e} \\
						a & b & c & d & \\
						a & b & c & d & \\
						a & b & c & d & \\
						a & b & e &   & \\
						\hline
						\end{array}       
						\]  
					\end{center}
				\end{minipage}
				\label{fig:covpb:preimages}
			}
		\end{scriptsize}
	\end{center}
	\label{fig:pbreach}
	\caption{Disassociation leading to a cover problem}
\end{figure}

Several techniques \cite{kanon_defn,kanon_sweeney,anatomy,slicing,sabrina2013,Wang2016,wang2017} have been defined in the literature to anonymize the dataset and cope with this particular association problem.
Anonymization by disassociation~\cite{terrovitis08,loukidesLGT14,Loukides2015} is a bucketisation technique \cite{anatomy,slicing,sabrina2013} that keeps the items without alternation/generalization but separates the records into clusters and chunks to hide their associations. 
More specifically,  disassociation transforms the original data into 
$k^m$-anonymous clusters of chunks to ensure that an attacker who knows up to $m$ items cannot associate them with less than $k$ records.
First, disassociation divides the dataset horizontally, creating clusters of records with similar frequent items. In Figure~\ref{fig:covpb:original}, the item $a$ has the highest number of occurrences in the records; thus, all records containing the item $a$ are added to the cluster. Next, disassociation divides the clusters vertically, creating inside each cluster $k^m$-anonymous record chunks of items, and one item chunk containing items that appear less than $k$ times.
Figure~\ref{fig:covpb:disassosiated} shows the result of disassociation with only one cluster, namely $\mathcal{T}^*$, containing 2 record chunks $R_{C_1}$, and $R_{C_2}$ and no item chunk. Both $R_{C_1}$ and $R_{C_2}$ are $2^2$-anonymous.
That is, an attacker who knows any two items about an individual will be able link them to at least two records from the dataset.

In a previous work~\cite{DBLP:conf/secrypt/BarakatBNG16}, the disassociation technique has been evaluated and found to be vulnerable to what is called, a cover problem. This cover problem provides %strong, moderate, and weak
attackers with the ability to associate itemsets in consequent record chunks with individuals' records, and compromises the disassociated dataset.
Figure~\ref{fig:covpb:preimages} highlights all datasets 
$\mathcal{T}^{\textcircled{1}}$, $\mathcal{T}^{\textcircled{2}}$, and $\mathcal{T}^{\textcircled{3}}$ 
that could be reconstructed from the disassociated dataset $\mathcal{T}^*$ by associating records from different record chunks.
Suppose now the attacker knows that an individual has searched for items $d$ and $e$ , $\{d,e\}$. In such a case, he/she can remove $\mathcal{T}^{\textcircled{2}}$ from the possible reconstructed dataset and will be able to link every record containing $d$ and $e$ with certainty to $\{a,b,c\}$ from the two remaining reconstructed datasets leading to a privacy breach.

This paper extends the previous work~\cite{DBLP:conf/secrypt/BarakatBNG16} by providing a theoretical and practical solution to this cover problem. 
It introduces a privacy preserving technique to safely disassociate a set-valued dataset and address this cover problem.
This method is further denoted as safe disassociation.
Our contributions are summarized as follows:
\begin{itemize}
\item We define the privacy guarantee for safe disassociation and provide the appropriate algorithm to achieve it.
%using partial suppression.
\item We investigate the efficiency of safe disassociation and its impact on the utility of aggregate analysis and the discovery of association rules.
\end{itemize}

The rest of the paper is organized as follows. 
Section 2 presents an overview of some of the related works in set-valued dataset anonymization .
%First of all, we recall the cover problem, already defined, and investigate its repercussions 
%on the privacy of a disassociated dataset.
In Section~\ref{sec:formalization}, 
we describe the formal model of disassociation and discuss the cover problem that makes disassociation vulnerable.
We define safe disassociation in Section.~\ref{sec:PPPS}. 
Experimental evaluations of this method is presented in Section~\ref{sec:exper}.
Finally, we conclude in Section~\ref{sec:conclusion} and present outlines for future work.

	\section{Related Work} \label{sec:relwork}
	%Data is becoming the new currency where analysis over the data are shiffting our world in the marketing field, the political campaigns, the medical sector and thousands of other domains. However, there exists another aspect of this currency that reveals a threat related to the privacy of the individuals linked to the data. Whether those data are treated in a very ethical environment or not, the need for privacy protecting algorithms is crucial and requested by many laws. 
Anonymization techniques can be divided into several categories, namely categorization, generalization, %modification, 
and bucketization, detailed hereafter.
	
With categorization~\cite{Jia2014,Chen2016,wang2017}, attributes are classified into several categories with respect to their sensitivity: identifying, sensitive or non-sensitive. 
	%In , the authors propose a privacy constraint $\rho$-uncertainty, to ensure that there are no sensitive association rules that can be inferred with confidence larger than $\rho$. 
On the opposite, all the data in this current work are considered to be grouped into sets whose items have the same level of sensitivity; the items combined together are sensitive and therefore only the link between the individual and their corresponding items must be protected.

Generalization techniques ~\cite{kanon_defn,kanon_sweeney,ldiversity} create homogeneous subsets by replacing values with ranges wide enough to create ambiguity. 
Reducing the extent of generalization, and thus decreasing the information loss is critical to preserving the usefulness of the data. 
In~\cite{yeye2009}, only a local generalization is applied whereas in~\cite{fard2010effective} a clustering based technique
is implemented to minimize the abstraction.
On the opposite, initial data in this current work are not generalized to ensure privacy or even modified.

Bucketization techniques~\cite{anatomy,fragmentation,fragmentationInference} 
are valuable due to their ability to keep the values intact.
Identifiable links between items are hidden by separating the attributes of the data. 
%These techniques are interesting when studying the utility of the data. 
Under this category lies the disassociation technique~\cite{terrovitis2012,loukidesLGT14,Loukides2015,bewong2017} that works on clustering the data and hiding identifiable links in each cluster by separating the attributes. % in each cluster. 	
Unfortunately, disassociation has shown to be vulnerable to a privacy breach when the items are tightly coupled. It is the ability of an attacker to link his/her partial background knowledge represented by at most $m$ items that he/she is allowed to have, according to the privacy constraint $k^m$-anonymity, with certainty to less than $k$ distinct records in a disassociated dataset.  
	
Differential privacy ~\cite{Dwork2006} is an anonymization technique that adds noise to the query results. It is based on a strong mathematical foundation that guarantees that an attacker is unable to identify sensitive data about an individual if his/her information were removed from the dataset.
Unlike differential privacy, this disassociation based  work does not distinguish between sensitive and non-sensitive attributes and is capable to retrieve viable and trustful information that has not been altered nor modified.
The authors in \cite{Zhang2015} defined cocktail a framework that uses both disassociation to anonymize the data for publishing as well as differential privacy to add noise on data querying. In spite of the originality of the idea, their technique subsumes the drawbacks of both disassociation and differential privacy; as it is vulnerable to the cover problem and releases questionable information.

	\section{Background} \label{sec:formalization}

To be self-content, this section recalls the basis of disassociation (Sec.~\ref{sub:model} and Sec.~\ref{sec:disass}) introduced in~\cite{terrovitis2012}. 
Next (Sec.~\ref{sec:pb}), it exhibits a class of privacy breach called cover problem inherent to this anonymization technique already sketched  in \cite{DBLP:conf/secrypt/BarakatBNG16} showing its repercussion on data privacy.

\subsection{Data model}\label{sub:model}
Let $\mathcal{D} = \{x_1,..., x_d\}$ be a set of items (\textit{e.g.}, supermarket products, query logs, or search keywords). Any subset $I \subseteq \mathcal{D}$ is an itemset (\textit{e.g.}, items searched together). Let $\mathcal{T}=\{r_1,..., r_n\}$ be a dataset of records where 
each $r_i \subseteq \mathcal{D}$ for $1\leq i \leq n$ is a record %in $\mathcal{T}$ 
and $r_i$ is associated with a specific individual $i$ of a population.
%Note that $\mathcal{D}$ is no more than the set of all the items in $\mathcal{T}$: $\mathcal{D}=\bigcup^{n}_{i=1}r_i$.
Let $R_\mathcal{T}$ be a subset of records in $\mathcal{T}$. 
Both $\mathcal{T}$ and $R$ have the multiset semantic, which
can contain more than one instance for each of its elements. 

With such notations, $s(I,\mathcal{T})$ is the number of records in $\mathcal{T}$ that contain all the elements in $I$.
More formally, it is defined by the following equation
\begin{equation}
    s(I,\mathcal{T})= \left\vert \left\{ r \in \mathcal{T} \mid I \subseteq r \right\}\right\vert
    \label{eq:support}
\end{equation}
By extension $s(\mathcal{T}) = s(\emptyset, \mathcal{T})$ and $s(R_T) = s(\emptyset, R_T)$ are the number of records in $\mathcal{T}$ and $R_T$ respectively. 

Table \ref{tab:notations} recalls the basic concepts and notations used in the paper. 
\begin{table}
\centering
\caption{Notations used in the paper}
\begin{scriptsize}
\begin{tabular}{|l|p{8cm}|}\hline
$\mathcal{D}$ & a set of items \\ \hline
$\mathcal{T}$ & a dataset containing individuals related records \\ \hline
$\mathcal{T}^*$ & a disassociated dataset i.e. a dataset anonymized using the disassociation technique\\ \hline
$\mathcal{T}^{\textcircled{.}}$ & a dataset reconstructed by cross joining the itemsets of the record chunks in a cluster from the disassociated dataset\\ \hline
$r$ &  a record (of $\mathcal{T}$) which is set of items associated with a specific individual of a population \\ \hline
$I$ &  an itemset included in $\mathcal{D}$ %that might or not be grouped together in a record of $\mathcal{T}$
\\ \hline
$s(I, \mathcal{T})$ & support of $I$ in $\mathcal{T}$ i.e. the number of records in $\mathcal{T}$ that are superset of $I$\\ \hline
$R$ &  a cluster in a disassociated dataset, formed by the horizontal partitioning of $\mathcal{T}$ \\ \hline
%$R_T$ &  a subset of records in  $\mathcal{T}$ \\ \hline
%$C$ &  a chunk in a disassociated dataset; is a set of sub-records of $\mathcal{T}$ in $\mathcal{T}^*$  \\ \hline
$R_T$ & an item chunk in a disassociated cluster\\ \hline
$R_C$ & a record chunk in a disassociated cluster  \\ \hline
$\delta$ &  maximum number of records allowed in a cluster \\ \hline
$n$ &  number of records in $\mathcal{T}$ \\ \hline
\end{tabular}
\end{scriptsize}
\label{tab:notations}
\end{table}

\subsection{Disassociation} \label{sec:disass}

Disassociation works under the assumption that the items should neither be altered, suppressed, nor generalized,  but at the same time the resulting dataset must respect the $k^m$-anonymity privacy constraint \cite{terrovitis08}.
Formally, $k^m$-anonymity is defined as follows:

\begin{definition}[$k^m$-anonymity] Given a dataset of records $\mathcal{T}$ whose items belong to a given set of items $\mathcal{D}$. The dataset $\mathcal{T}$ is $k^m$-anonymous if $\forall I \subseteq \mathcal{D}$ such that $|I|\leq m$, the number of records in $\mathcal{T}$ that are superset of $I$  is greater than or equal to $k$, \textit{i.e.}, $s(I, \mathcal{T}) \geq k$.

\end{definition}

%The disassociation technique \cite{terrovitis2012,Loukides2014}, for its part, ensures privacy through bucketization to provide better utility when it comes to frequent itemsets discovery and aggregate analysis. 

%Disassociation separates the dataset into clusters of $k^m$-anonymous record chunks and an item chunk. The key idea is to sort records based on the most frequent items and then group them horizontally into smaller disjoint clusters $\{R_1$, \dots, $R_q\}$. In a next step, each cluster $R_i$ is 
%vertically partitioned  into $k^m$-anonymous chunks $\{C_1,...,C_{t}\}$ and an item chunk $C_{T}$ to hide infrequent combinations. These chunks are created subsequently as long as there are items that can be grouped together in a way to satisfy the $k^m$-anonymity privacy constraint. The remaining items, the ones that occur less than $k$ times, are moved to the item chunk $C_{T}$\footnote{The shared chunk, as defined in the original paper \cite{terrovitis2012}, is omitted here for simplicity.}.

Given a dataset $\mathcal{T}$, applying $k^m$-disassociation\footnote{In what follows, we use $k^m$-disassociation to denote a dataset that is disassociated and satisfies $k^m$-anonymity.} on $\mathcal{T}$ produces a dataset $\mathcal{T}^*$ composed of $q$ clusters, each divided into a set of record chunks and an item chunk,

\begin{small}
\begin{equation*}
\mathcal{T}^*=\left\{
\{R_{1_{C_1}},\dots, R_{1_{C_{t}}},R_{1_Y}\} , \dots,
\{R_{q_{C_1}},\dots, R_{q_{C_{v}}},R_{q_T} \} \right\}
\end{equation*}
\end{small}

such that $\forall R_{i_{C_j}} \in \mathcal{T}^*$, $R_{i_{C_j}}$ is $k$-anonymous,
where,
\begin{itemize}
\item $R_{i_{C_j}}$ represents the itemsets of the $i^{th}$ cluster that are contained in its $j^{th}$ record chunk. 
\item $R_{i_T}$ is the item chunk of the $i^{th}$ cluster containing items that occur less than $k$ times.
\end{itemize}

The example in Figure \ref{fig:covpb:disassosiated} shows that the $2^2$-disassociated dataset contains only one cluster with two $2^2$-anonymous record chunks. 
We thus have $\mathcal{T}^*=\left\{R_{C_{1}}, R_{C_{2}}\right\}$ with
$R_{C_{1}} = \{ \{a\}, \{a,b,c,d\}, \{a,b,c,d\}, \{a,b,c,d\},\{a,b,c,d\},\{a,b\}\}$, and 
$R_{C_{2}} = \{ \{e\}, \{e\}\}$. 

According to ~\cite{terrovitis2012} and by construction, $k^m$-disassociation guarantees all the produced record chunks are  $k^m$-anonymous. This can be better explained in this example as: any combination of two items ($m=2$) from Figure \ref{fig:covpb:original}, for example $\{a,b\}$, is found at least in two records ($k=2$) in the record chunk, thus satisfying $k^m-anonymity$ in Figure \ref{fig:covpb:disassosiated}.\\
However, to ensure the privacy of a disassociated dataset, $k^m$-anonymity has to be guaranteed in one of the valid reconstructed datasets of $\mathcal{T}^*$ since an attacker can produce all of them, provided he/she knows $\mathcal{T}^*$, $k$, and $m$ \cite{terrovitis2012}. 
This privacy guarantee is formally expressed as follows:

\begin{definition}[Disassociation Guarantee]\label{def:disassociation:guarantee}
Let $\mathcal{G}$ be the inverse transformation of $\mathcal{T}^*$ with respect to a $k^m$ disassociation, \textit{i.e.},
the set of all possible datasets whose $k^m$ disassociation would yield $\mathcal{T}^*$.
\emph{Disassociation guarantee} is established if for any $I \subseteq \mathcal{D}$ such that  
$|I| \leq m$,  there exists  $\mathcal{T}^{\textcircled{.}} \in \mathcal{G}(\mathcal{T}^*)$ with 
$s(I, \mathcal{T}^{\textcircled{.}}) \geq k$.  
\end{definition} 

The Disassociation Guarantee ensures that for any individual with a complete record $r$, and for an attacker who knows up to $m$ items of $r$, at least one of the datasets reconstructed by the inverse transformation contains the record $r$ $k$ times or more. 
That is, the record $r$, as all other records, exists $k$ times in at least one of the inverse transformations.    

The authors in \cite{DBLP:conf/secrypt/BarakatBNG16} demonstrate that this disassociation guarantee is not enough to ensure privacy. They show that whenever a disassociated dataset is subject to cover problem, a privacy breach might be encountered. In the following section, we briefly present the cover problem.

\subsection{Cover Problem}\label{sec:pb}
Let us recall, from Subsection~\ref{sec:disass} that the disassociation technique hides itemsets that occur less than $k$ times in the original dataset for a given $m$ items, by 1) dividing them into $k^m$-anonymous sub-records in record chunks and 2) ensuring that all the records reconstructed by the inverse transformation are $k^m$-anonymous in at least one of the resulting datasets.

A cover problem is defined by the ability to associate one-to-one or one-to-many items in two distinct record chunks, from the same cluster, in the disassociated data. Without loss of generality, we focus on one cluster $R=\{R_{C_1},\ldots, R_{C_t},R_{T} \}$ of $\mathcal{T}^*$ resulting from a $k^m$-disassociation.
%The target item from the first chunk has equal or higher support and we call it the covered item.  
Formally, the cover problem is defined as follows.  

\begin{definition}[Cover Problem]\label{def:coverpb}
	Let $I_j$ be the set of items in $R_{C_j}$,
	$I_{j} = \{x \in  R_{C_{j}}\}$.
	If there exists an item $z\in I_j$ such that the support of $I_j$ is equal to the support of the singleton $\{z\}$ in $R_{C_{j}}$, \textit{i.e.},
	\begin{equation}\label{eq:coverpb}
	s(I_j, R_{C_{j}})  = s(\{z\}, R_{C_{j}}),
	\end{equation}	
	the cluster $R$, and the dataset $\mathcal{T}^*$ as a consequence, are subject to a \emph{cover problem}.
    
    \end{definition}
$\forall$ $z \in I_j$, if $z$ satisfies equation~\eqref{eq:coverpb}, $z$ is denoted as \emph{covered item}. The set of all the covered items is denoted as $L_j$, which is contained in $I_j$.
$\forall x \in I_j$, such that $x \notin L_j$, $x$ is not a covered item, then $x$ is denoted as a \emph{covering item}. Obviously, the set of covering items is $I_j \setminus L_j$.
For instance, in Figure~\ref{fig:covpb:disassosiated},  $I_1$ = \{$a, b, c, d$\}. 
The support of the itemset $I_1$ in $R_{C_1}$, which is $s(I_1, R_{C_1})$, is equal to 4. In turn, 
it is equal to the minimum support of the items in $I_1$, which, in our example, is 
$s(\{c\}, R_{C_1}) = 4$.
Therefore, we say that the item $c$ is covered by the items $a$ and $b$. 
Similarly, the item  $d$ is also covered by the items $a$ and $b$.
%In other words in $\mathcal{T}^*$, a cover problem is practically detected for Chicago city since it was present exclusively in combination with all the other cities of $R_{C_{1}}$.

% interprétation en termes de capteurs de localisation 

Intuitively, a privacy breach occurs if an attacker is able to link $m$ items from his/her background knowledge, to less than $k$ records in all the datasets reconstructed by the inverse transformation. More subtle is when these records contain the same set of items in all the reconstructed datasets, thus linking more than $m$ items to the individual, or worse leading to a complete de-anonymization by linking, with certainty, the complete set of items to the individual.

We will show in the following that this privacy breach might occur whenever the dataset is subject to a cover problem.
Formally speaking:
\begin{lemma}\label{lemma:pb}
	Let $\mathcal{T}^*$ be a $k^m$-disassociated dataset subject to a cover problem. The disassociation guarantee is thus not valid for $m \geq 2$.
\end{lemma}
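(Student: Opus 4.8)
The plan is to exhibit a single itemset $I$ with $|I|=2\le m$ whose support is strictly less than $k$ in \emph{every} dataset of $\mathcal{G}(\mathcal{T}^*)$; by Definition~\ref{def:disassociation:guarantee} this immediately refutes the disassociation guarantee. First I would fix a cluster $R=\{R_{C_1},\dots,R_{C_t},R_T\}$ and a record chunk $R_{C_j}$ witnessing the cover problem, take a covered item $z\in L_j$, and recall that $I_j$ is the item set of $R_{C_j}$. Since $\{z\}\subseteq I_j$ and $s(I_j,R_{C_j})=s(\{z\},R_{C_j})$, the subrecords of $R_{C_j}$ that contain $z$ are exactly those that contain all of $I_j$; equivalently, \emph{every subrecord of $R_{C_j}$ holding $z$ also holds the whole covering set $I_j\setminus L_j$} --- and it is precisely here that the covered item, as opposed to an arbitrary item of $I_j$, is needed. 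Then I would pick an item $y\notin I_j$ occurring in $R$ that the $k^m$-disassociation places in a record chunk built after $R_{C_j}$ or in the item chunk $R_T$ --- such a $y$ exists whenever the cluster is genuinely disassociated --- and propose $I=\{z,y\}$ as the counterexample, exactly as $z=d$ (covered by $a,b$) and $y=e$ do in Figure~\ref{fig:covpb:disassosiated}.

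The heart of the argument is to show $s(\{z,y\},\mathcal{T}^{\textcircled{.}})<k$ for every $\mathcal{T}^{\textcircled{.}}\in\mathcal{G}(\mathcal{T}^*)$. I would argue by contradiction: suppose some $\mathcal{T}^{\textcircled{.}}$ in the inverse transformation had at least $k$ records containing both $z$ and $y$. Because re-applying $k^m$-disassociation to $\mathcal{T}^{\textcircled{.}}$ must return $\mathcal{T}^*$, the chunk of $\mathcal{T}^{\textcircled{.}}$ reproducing $R_{C_j}$ coincides (as a multiset of subrecords) with $R_{C_j}$, so the cover property survives and each of those $\ge k$ records in fact contains $I_j\cup\{y\}$. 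Hence $s(I_j\cup\{y\},\mathcal{T}^{\textcircled{.}})\ge k$, and therefore $s(\{x,y\},\mathcal{T}^{\textcircled{.}})\ge k$ for every $x\in I_j$ and $s(\{y\},\mathcal{T}^{\textcircled{.}})\ge k$. Now I would run the disassociation algorithm on $\mathcal{T}^{\textcircled{.}}$ and follow it as it greedily grows the record chunk destined to become $R_{C_j}$: that chunk ends with item set exactly $I_j$, so the greedy procedure must have decided that no still-available item can be appended without breaking $k^m$-anonymity. But $y$ is still available, and $I_j\cup\{y\}$ \emph{is} $k^m$-anonymous over the records of this cluster --- every size-$\le m$ subset avoiding $y$ has support $\ge k$ because $R_{C_j}$ is $k^m$-anonymous, and every size-$\le m$ subset containing $y$ sits inside $I_j\cup\{y\}$, whose support is $\ge k$. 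So the greedy step would have absorbed $y$ (and, if $y$ is more frequent than some members of $I_j$, it would have absorbed $y$ even earlier); since the chunk only grows, it could not end as $I_j$, contradicting that the disassociation of $\mathcal{T}^{\textcircled{.}}$ equals $\mathcal{T}^*$. This gives $s(\{z,y\},\mathcal{T}^{\textcircled{.}})<k$ for all reconstructions, and since $|\{z,y\}|=2\le m$ the disassociation guarantee fails.

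I expect the delicate point to be the last part of the previous paragraph: turning ``the greedy chunk builder would have swallowed $y$'' into a rigorous statement forces me to commit to the exact mechanics of the $k^m$-disassociation of Terrovitis \emph{et al.} (items scanned in decreasing frequency, each record chunk grown maximally, items of support below $k$ diverted to the item chunk), to verify that $y$ really is available when the chunk for $R_{C_j}$ is being assembled --- which is why $y$ must be taken from a chunk created \emph{after} $R_{C_j}$ or from $R_T$ --- and to check that the $\ge k$ records witnessing $s(\{z,y\},\mathcal{T}^{\textcircled{.}})\ge k$ are all routed into the cluster $R$ when $\mathcal{T}^{\textcircled{.}}$ is re-disassociated. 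A smaller, bookkeeping matter is guaranteeing the external item $y$, i.e.\ excluding the degenerate ``single record chunk, empty item chunk'' cluster, which is not a genuine disassociation and for which the guarantee reduces to the $k^m$-anonymity of the chunk itself. Once these are handled, $\{z,y\}$ is the required witness and the lemma follows.
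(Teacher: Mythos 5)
Your proof follows essentially the same route as the paper's: pick a covered item $z$ and an item $y$ lying in a different chunk, suppose some reconstruction $\mathcal{T}^{\textcircled{.}}$ associates $\{z,y\}$ in at least $k$ records, deduce via the cover property that $I_j\cup\{y\}$ would then be $k^m$-anonymous over the cluster, and conclude that the vertical partitioning should therefore have placed $y$ in $R_{C_j}$, contradicting the hypothesis. You merely make explicit what the paper leaves implicit --- the greedy mechanics of chunk construction and the existence of an external $y$ (i.e.\ excluding the degenerate single-chunk cluster) --- which tightens rather than changes the argument.
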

%We show in the proof that an attacker can breach the privacy resulting from the disassociation technique.

\begin{proof} 
Let $\mathcal{T}^*$ be a $k^m$-disassociated dataset subject to a cover problem.
The following set 
$I_j = 
	\{x | x \in  R_{C_{j}}\}
	\label{eq:Ij}$
	 is thus not empty and there exists a covered item $z\in I_j$ such that 
	 $s(I_j, R_{C_{j}})  = s(\{z\}, R_{C_{j}})$.
	 This means that each record $r$ of $R_{C_{j}}$ that contains $z$ includes also $I_j$. 
%Suppose now that the attacker knows about an individual the set $\{z,y \}$ where 
Suppose now that the attacker's background knowledge is the set $\{z,y \}$ where 
$y$ is an item in another record chunk $R_{C_{l}}$. 

By contradiction, suppose that the disassociation guarantee is valid, \textit{i.e.}, 
$z$ and $y$ are associated together in $k$ records in at least one of the datasets 
$\mathcal{T}^{\textcircled{.}}$, reconstructed by the inverse transformation of $\mathcal{T}^*$.
Since $z$ is a covered item, it appears in each  record $r$ defined above.
The item $y$ will also be associated $k$ times with all the items in $I_j$. 

While this is correct from a privacy perspective, it cannot be considered for disassociation. 
Items, $y$, $z$ and any covering item $x \in I_j$ are indeed 
considered as $k^m$-anonymous, and, therefore, should have been allot to the same record chunk 
$R_{C_{j}}$ according to disassociation\footnote{Vertical partitioning creates $k^m$-anonymous record chunks.}, 
whereas $z$ and $y$ are respectively items of chunks $R_{C_{j}}$ and $R_{C_{l}}$  by hypothesis.

%Now, if the attacker's background knowledge consists of itemsets of size $m$ and there exists an itemset $I$ that contains both $x_{i,j}$ and $y_{i,j-1}$, 
%( \textit{i.e.}, $\{x_{i,j}, y_{i,j-1}\} \subseteq I$) a privacy breach will occur. In fact, these $m$ items will never be associated together in $k$ records in any of the datasets reconstructed by the inverse transformation of $\mathcal{T}^*$.\vspace{-7pt}
\end{proof}

	\section{Safe Disassociation}\label{sec:PPPS}
	In this section, we show that a safe disassociation can be achieved, to ensure that a released/published dataset is no longer subject to a cover problem. This privacy guarantee is formally defined as follows: 
     
\begin{definition}[Safe Disassociation]\label{th:safe:disassociation}
		Let $\mathcal{G}$ be the inverse transformation of $\mathcal{T}^*$ with 
		respect to a $k^m$-disassociation, whose set of items is $\mathcal{D}$. 
		The dataset $\mathcal{T}^*$ is \emph{safely disassociated} if 
		$\forall I \subseteq \mathcal{D}$ such that
		$|I| \leq m$, there exists 
		
		$\mathcal{T}^{\textcircled{.}} \in \mathcal{G}(\mathcal{T}^*) $ with  
		$s(I, \mathcal{T}^{\textcircled{.}}) \geq k$ and 
		$\mathcal{T}^{\textcircled{.}}$ is not subject to a cover problem.
\end{definition}

Safe disassociation ensures that at least a dataset $\mathcal{T}^{\textcircled{.}}$ 
reconstructed by the inverse transformation of $\mathcal{T}^*$:
\begin{itemize}
    \item contains $k$ records for an itemset of size $m$ or less, abiding to the ${k}^{m}$-anonymity privacy constraint, and 
    \item the dataset $\mathcal{T}^{\textcircled{.}}$ has no covered items. 
\end{itemize}

In the following, we show how this safe disassociation can be achieved by applying a partial suppression on a disassociated dataset.

\subsection{Achieving safe disassociation with partial suppression}

In previous works dedicated to privacy preservation~\cite{Jia2014}, partial suppression is used to ensure that no sensitive rules can be inferred with a confidence greater than a certain threshold. Here, we assume that partial suppression can achieve disassociation safely regardless the sensitivity of the data;
all items are considered with the same level of sensitivity. Moreover, using partial suppression, we minimize the information loss with respect to our privacy guarantee. Unlike global suppression that removes the items in question from all the records, partial suppression remains more efficient in terms of utility.
\begin{partialsuppression}
Applying the following rules until the hypothesis is established produces a safely disassociated dataset.  
\begin{itemize}	
\item \emph{Hypothesis:} Let $I$, as defined in Definition~\ref{def:coverpb}, be the itemset of the record chunk $R_{C_j}$ that suffers from a cover problem. 
\item \emph{Preconditions:} Let $\textit{card} = \lceil \dfrac{|I|}{2} \rceil$ be the count of records that are going to be partially suppressed from  $R_{C_j}$, and let $\delta$ be the maximum number of records allowed in the cluster. Partial suppression is applicable over $R_{C_j}$ when:
    \begin{eqnarray}
    |R_{C_j}| &\leq &  \delta -2 \label{eq:pre:preservingmaxclustersize}\\
    s(I, R_{C_j}) &\geq &k+\min(\textit{card},m) \label{eq:pre:preservingkmanonymity}
    \end{eqnarray}
    
\item \emph{Rules:} 
\begin{enumerate}
    \item Create a random partition $I_{1} \cup I_{2} .... \cup I_{\textit{card}}$ of $I$ where each set $I_j$ is composed of two items, but $I_{\textit{card}}$, which is a singleton if the cardinality of $I$ is odd. 
    \item Create two empty sets $L_{1}$ and $L_{2}$, known as the ghost records.
    \item For each $I_{i} \in I$, successively:
	    \begin{enumerate}
	    \item\label{item:partialsup:suppr} suppress  $I_{i}$ from a record $r_i \in R_{C}$ such that $r_{i} = I$ and 
	    \item add the items   $x_{1}$ and $x_{2}$ from $I_{i}$ to respectively $L_{1}$ and $L_{2}$:\\
         $L_{1} = L_{1} \cup \{x_{1}\} $ and  $L_{2} = L_{2} \cup \{x_{2}\} $
        \end{enumerate}
	\item Add the two ghost records $L_{1}$ and $L_{2}$ to $R_{C_j}$ \\
\end{enumerate}
\end{itemize}
\end{partialsuppression}
%Intuitively, we pinpoint the covered items in a record chunk  \textit{i.e.}, $L_{i,j-1} = \{p \in  R_{i_{C_{j-1}}} | s(p, R_{i_{C_{j-1}}}) = s(I_{i,j-1}, R_{i_{C_{j-1}}}) \}$, where $I_{i,j-1}$ is  defined as in Definition~\ref{def:coverpb}. As preconditions to partial suppression, the support of $I_{i,j-1}$ must be greater than the \textit{card}inality of $L_{i,j-1}$ and greater or equal to $k + m$.\\
%Practically, we suppress first all the covering items $L_{i,j-1} \backslash I_{i,j-1}$ from a record $r \subseteq R_{i_{C_{j-1}}}$, 
%such that $I_{i,j-1} \subseteq r$. Then for each covered item $p \in L_{i,j-1}$ we find a record $r' \subseteq R_{i_{C_{j-1}}}$, such that $I_{i,j-1} \subseteq r'$ and we suppress $p$ from it.
By definition, a \emph{cover problem} is the ability to link one-to-one or one-to-many items in two record chunks. 
This arises when there exists $x \in I_j$ such that $s(I_j, R_{C_{j}})  = s(\{x\}, R_{C_{j}})$. 
The aim of partial suppression is to suppress items in such a way as to ensure that $s(I_j, R_{C_{j}})$ remains different than $s(\{x\}, R_{C_{j}})$, thus $s(I_j, R_{C_{j}}) \neq s(\{x\}, R_{C_{j}})$.

\subsubsection{Discussion on preconditions}
Preconditions \eqref{eq:pre:preservingmaxclustersize} and \eqref{eq:pre:preservingkmanonymity} play an important role in ensuring that the safe disassociation is preserved throughout the process of partial suppression. 
\begin{description}
	\item[Precondition \eqref{eq:pre:preservingmaxclustersize}] is defined to keep the size of the clusters bounded by the maximum cluster size $\delta$. In fact, in the horizontal partitioning, no additional records are added to the cluster if the maximum cluster size is reached. As a consequence, the record chunks created from the vertical partitioning  have a cardinality less or equal to $\delta$. Hence, to add the two ghost records $L_{1}$ and $L_{2}$ to a record chunk, its cardinality must be less than $\delta -2$.
	
	\item[Precondition \eqref{eq:pre:preservingkmanonymity}]  is defined to preserve the $k^m$-anonymity constraint in the record chunk. 
	Partial suppression modifies $\textit{card}$ occurrences of $I$, and, therefore, leaves $s(I,R_{C_j}) -\textit{card}$ unmodified. Thus,
	we have:
	\begin{eqnarray}
	s(I,R'_{C_j})   & =     &  s(I,R_{C_j}) -\textit{card} \nonumber\\
	& \geq  & k + \min(m, \textit{card}) -\textit{card} \label{eq:majorationSIR'}
	\end{eqnarray}
	
	Let $X=\{x_{1}, x_{2}, ... x_{m}\}$ be a subset of $I$, $X\subset I$. There are two cases to consider, depending on whether $m$ is greater than $\textit{card}$ or not. 
	\begin{itemize}
		\item if $\textit{card} \leq m$: $s(I,R'_{C_j})$ is greater than $k$ according to inequality~\eqref{eq:majorationSIR'}. 
		We have $s(X,R'_{C_j}) \geq s(I,R'_{C_j}) \geq k$. This means that $k^m$-anonymity is satisfied.
		
		\item if $m < \textit{card}$: %and let us focus on 
		%counting the number of occurrences of any set $\{x_{1}, x_{2}, ... x_{m}\}$ in the modified record chunk $R'_{C_j}$.
		%It first appears in the  
		$s(I,R_{C_j}) -\textit{card}$ records are kept unchanged by the partial suppression. After applying step~\ref{item:partialsup:suppr} in partial suppression, we notice that $s(X,R'_{C_j}) \geq \textit{card} -m$. This is because the smallest value, $\textit{card} -m$, is obtained when each item in $X$ belongs to a distinct pair $I_j$.
		In this situation, $m$ records do not associate items $x_{1}$, $x_{2}$,\ldots,$x_{m}$ together.
		As a result, the following applies according to hypothesis~\eqref{eq:pre:preservingkmanonymity}.
		\begin{eqnarray}
	s(X,R'_{C_j}) &\geq & s(I,R_{C_j}) -\textit{card} + \textit{card} -m  \nonumber\\
	&\geq & s(I,R_{C_j}) -m  \nonumber\\
	&\geq & k+\min(\textit{card},m) -m  \nonumber\\
	&\geq & k \nonumber
	\end{eqnarray}

	\end{itemize}
\end{description}

\begin{lemma}
Given a record chunk $R_{C_j}$, we say that partial suppression achieves safe disassociation on $R_{C_j}$ if preconditions ~\eqref{eq:pre:preservingmaxclustersize} and~\eqref{eq:pre:preservingkmanonymity} are verified.
\end{lemma}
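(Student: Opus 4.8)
The plan is to unfold Definition~\ref{th:safe:disassociation} into two obligations for the dataset produced from $\mathcal{T}^*$ by one application of partial suppression to $R_{C_j}$: (i) it still admits a reconstruction that is $k^m$-anonymous, i.e.\ the disassociation guarantee of Definition~\ref{def:disassociation:guarantee} survives; and (ii) that reconstruction has no covered item, i.e.\ the modified record chunk $R'_{C_j}$ no longer satisfies Equation~\eqref{eq:coverpb}. Obligation (i) is essentially already discharged by the ``Discussion on preconditions'' preceding the lemma: precondition~\eqref{eq:pre:preservingkmanonymity} was shown to yield $s(X,R'_{C_j}) \geq k$ for every $X \subseteq I$ with $|X| \leq m$, so $R'_{C_j}$ is still $k^m$-anonymous; since every other record chunk and the item chunk are left untouched, the result is still a valid $k^m$-disassociation and the guarantee of~\cite{terrovitis2012} carries over. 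Precondition~\eqref{eq:pre:preservingmaxclustersize} supplies the one structural side condition, namely that adding the ghost records $L_1,L_2$ keeps $|R'_{C_j}| \leq \delta$ so that no cluster overflows, and one checks in passing that there are enough records equal to $I$ for the $\textit{card}$ suppressions to be carried out, again using~\eqref{eq:pre:preservingkmanonymity}.

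The new content is obligation (ii). First observe that $R'_{C_j}$ still has item set exactly $I$: every item deleted from a record is re-inserted into a ghost record, and no fresh item appears, so $I' := \{x \in R'_{C_j}\} = I$. Now track two supports. For the full itemset, the records of $R_{C_j}$ that contain all of $I$ are precisely those equal to $I$, because $I$ is the whole item set of the chunk so no strict superset can occur; partial suppression destroys exactly $\textit{card}$ of them (the distinct witnesses $r_1,\dots,r_{\textit{card}}$, each losing its block $I_i$), and the ghost records, being of size $\leq \textit{card} < |I|$, are not supersets of $I$ either; hence $s(I',R'_{C_j}) = s(I,R_{C_j}) - \textit{card}$. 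For a singleton $\{z\}$ with $z \in I$, the item $z$ belongs to exactly one block $I_i$ of the partition, so it is removed from exactly one record (the witness $r_i$) and added to exactly one ghost record ($L_1$ if $z=x_1$, $L_2$ if $z=x_2$, with the obvious reading when $|I|$ is odd and $I_{\textit{card}}$ is a singleton); the two changes cancel, so $s(\{z\},R'_{C_j}) = s(\{z\},R_{C_j})$.

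Combining the two, for every $z \in I' = I$,
\[
s(\{z\},R'_{C_j}) = s(\{z\},R_{C_j}) \geq s(I,R_{C_j}) > s(I,R_{C_j}) - \textit{card} = s(I',R'_{C_j}),
\]
where the first inequality holds because each record counted in $s(I,R_{C_j})$ also contains $z$, and the strict one because $\textit{card} = \lceil |I|/2 \rceil \geq 1$. Thus no $z \in I'$ verifies Equation~\eqref{eq:coverpb} for $R'_{C_j}$, so that chunk, and any reconstruction built on it, is free of a cover problem. Together with (i) this establishes Definition~\ref{th:safe:disassociation} on $R_{C_j}$, which is the claim.

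I expect the main obstacle to be precisely the bookkeeping of the second paragraph: making sure the ghost-record construction preserves \emph{every} singleton support exactly while strictly lowering $s(I,\cdot)$, and dealing cleanly with the parity corner case and with the degenerate case $|I|=1$ (where a cover problem has no covering items and the hypothesis of partial suppression does not apply). The $k^m$-anonymity half needs no new argument, since it is already carried out in the discussion that precedes the lemma.
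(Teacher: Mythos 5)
Your proof is correct and follows essentially the same route as the paper's: both arguments rest on showing that the ghost-record construction preserves every singleton support $s(\{z\},R'_{C_j}) = s(\{z\},R_{C_j})$ exactly while strictly decreasing the joint support, so that Equation~\eqref{eq:coverpb} can no longer hold in the modified chunk, with the $k^m$-anonymity half delegated to the preconditions discussion in both cases. The only cosmetic difference is that the paper first establishes $s(\{x,y\},R'_{C_j}) < s(\{x\},R'_{C_j})$ for every pair $x,y \in I$ and then concludes by monotonicity of support, whereas you compute $s(I,R'_{C_j}) = s(I,R_{C_j}) - \textit{card}$ directly.
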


\begin{proof}

Let $R'_{C_j}$ be the result of applying the partial suppression rules on the record chunk $R_{C_j}$. 
The itemset $I$ is randomly partitioned in $\textit{card}$ (where $\textit{card} = \lceil \dfrac{|I|}{2} \rceil$) disjoint subsets of cardinality equal to 2, and possibly one singleton if $|I|$ is odd. Those subsets are used successively to perform partial suppression. 
Since the preconditions are verified, it is straightforward to prove that due to the rules of partial suppression, the support of any item in $R_{C_j}$ remains the same. 
$\forall x \in I$, $x$ is suppressed from a record $r_{i}= I$ and then added randomly to one of the ghost records $L_{1}$ or $L_{2}$,  therefore:
\begin{equation}
s(\{x\}, R'_{C_j}) = s(\{x\}, R_{C_j}). \label{eq:suporxdoesnotchange}
\end{equation}

However, what has been partially suppressed is the association between any two items $x$ and $y$ in $I$. Two case scenarios can arise:
\begin{itemize}
\item items $x$ and $y$ belong to the same subset $I_{i}$ of $I$.
Thus, $x$ and $y$ are suppressed once from the same record $r_{i}$, and added to $L_{1}$ and $L_{2}$ respectively. This ensures that they cannot be associated again in the ghost records:
$$s(\{x,y\},R'_{C_j}) = s(\{x,y\},R_{C_j}) -1 =s(\{x\},R'_{C_j}) -1= s(\{y\},R'_{C_j})-1$$
\item 
items $x$ and $y$ belong to two different subsets of $I$, $I_i$ and $I_j$ respectively.
The association between $x$ and $y$ is lost twice since these items were suppressed from two distinct records in $R_{C_j}$, then:
$s(\{x,y\}, R'_{C_j})$ is equal to  $s(\{x,y\}, R_{C}) -2$ (resp. to $s(\{x,y\}, R_{C}) -1$)
if $x$ and $y$ are added to the different ghost records $L_{1}$ and $L_{2}$
(resp. if $x$ and $y$ are added to the same ghost records).
\end{itemize}
In both cases, we have:
\begin{eqnarray*}
s(\{x,y\},R'_{C_j}) &<& s(\{x\},R'_{C_j}) \\ 
                    &<&s(\{y\},R'_{C_j})
\end{eqnarray*}

Due to the inequality $S(I,R'_{C_j}) \le s(\{x,y\},R'_{C_j})$ and $\forall x, y$ in $I$ we have 
$ s(I,R'_{C}) < s(\{x\},R'_{C}),$
which concludes the proof.\\
\end{proof}

\begin{figure}[ht]
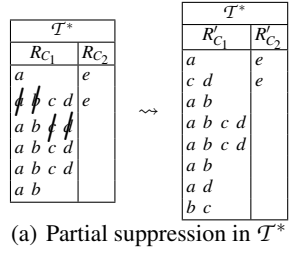
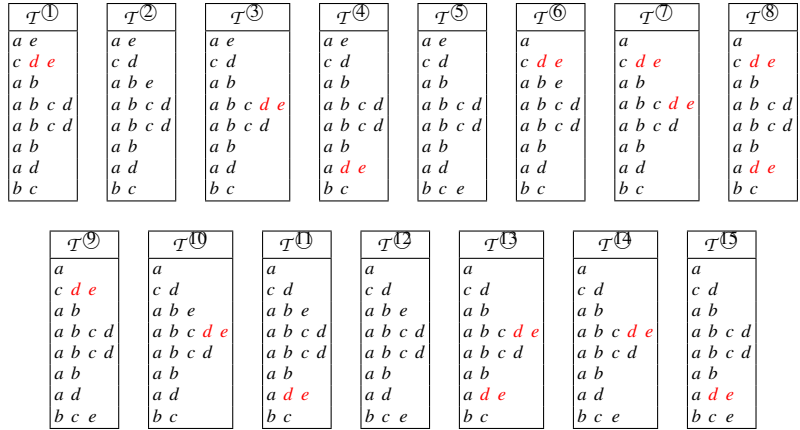

\begin{center}
\begin{scriptsize}
    \subfigure[Partial suppression in $\mathcal{T}^*$]{
    \begin{minipage}{1\textwidth}
    \begin{center}
	    \[
	   \arraycolsep=1.4pt
        \begin{array}{|llll|l|}
        \hline
        \multicolumn{5}{|c|}{\mathcal{T}^*}\\
            \hline
        \multicolumn{4}{|c|}{R_{C_1}} & R_{C_2}\\
        \hline
        a &   &   &   & e \\
        \cancel{a} & \cancel{b} & c & d & e \\
        a & b & \cancel{c} & \cancel{d} & \\
        a & b & c & d & \\
        a & b & c & d & \\
        a & b &   &   & \\
        \hline
        \end{array}
        \,\,\,\,\,\,\,\,\,
        \leadsto
        \,\,\,\,\,\,\,\,\,
        \begin{array}{|llll|l|}
        \hline
        \multicolumn{5}{|c|}{\mathcal{T}^*}\\
            \hline
        \multicolumn{4}{|c|}{R'_{C_1}} & R'_{C_2}\\
        \hline
        a &   &   &   & e \\
        c & d &   &   & e \\
        a & b &   &   & \\
        a & b & c & d & \\
        a & b & c & d & \\
        a & b &   &   & \\
        a & d &   &   & \\
        b & c &   &   & \\
        \hline 
        \end{array}
        \]  
        \end{center}
      \end{minipage}
        \label{fig:partsuppex}
    }   
    
    \subfigure[Inverse transformation after applying partial suppression on $\mathcal{T}^*$]{
      \begin{minipage}{1\textwidth}
        \begin{center}
	    \[
	    \arraycolsep=1.4pt
        \begin{array}{|llll|}
            \hline
            \multicolumn{4}{|c|}{\mathcal{T}^{\textcircled{1}}}\\
            \hline
            a & e &   &\\
            c & {\color{red}d} & {\color{red}e} &\\
            a & b &   &\\
            a & b & c & d \\
            a & b & c & d \\
            a & b &   &\\
            a & d &   &\\
            b & c &   &\\
            \hline
        \end{array}        
        \,\,\,\,\,\,\,\,\,\,\,\,\,\,
        \begin{array}{|llll|}
	        \hline
	        \multicolumn{4}{|c|}{\mathcal{T}^{\textcircled{2}}}\\
	        \hline
	        a & e &   & \\
	        c & d &   & \\
	        a & b & e & \\
	        a & b & c & d \\
	        a & b & c & d \\
	        a & b &   & \\
	        a & d &   & \\
	        b & c &   & \\
	        \hline
        \end{array}        
        \,\,\,\,\,\,\,\,\,\,\,\,\,\,
        \begin{array}{|lllll|}
        \hline
        \multicolumn{5}{|c|}{\mathcal{T}^{\textcircled{3}}}\\
	        \hline
	        a & e &   &   & \\
	        c & d &   &   & \\
	        a & b &   &   & \\
	        a & b & c & {\color{red}d} & {\color{red}e} \\
	        a & b & c & d & \\
	        a & b &   &   & \\
	        a & d &   &   & \\
	        b & c &   &   & \\
	        \hline
        \end{array}        
        \,\,\,\,\,\,\,\,\,\,\,\,\,\,
        \begin{array}{|llll|}
	        \hline
	        \multicolumn{4}{|c|}{\mathcal{T}^{\textcircled{4}}}\\
	        \hline
	        a & e &   & \\
	        c & d &   & \\
	        a & b &   & \\
	        a & b & c & d \\
	        a & b & c & d  \\
	        a & b &   & \\
	        a & {\color{red}d} & {\color{red}e} & \\
	        b & c &   & \\
	        \hline
        \end{array}       
        \,\,\,\,\,\,\,\,\,\,\,\,\,\,
		\begin{array}{|llll|}
			\hline
			\multicolumn{4}{|c|}{\mathcal{T}^{\textcircled{5}}}\\
			\hline
			a & e &   & \\
			c & d &   & \\
			a & b &   & \\
			a & b & c & d \\
			a & b & c & d \\
			a & b &   & \\
			a & d &   & \\
			b & c & e & \\
			\hline
		\end{array}        
        \,\,\,\,\,\,\,\,\,\,\,\,\,\,
		\begin{array}{|llll|}
			\hline
			\multicolumn{4}{|c|}{\mathcal{T}^{\textcircled{6}}}\\
			\hline
			a &   &   & \\
			c & {\color{red}d} & {\color{red}e} & \\
			a & b & e & \\
			a & b & c & d \\
			a & b & c & d \\
			a & b &   & \\
			a & d &   & \\
			b & c &   & \\
			\hline
		\end{array}  
		\,\,\,\,\,\,\,\,\,\,\,\,\,\,
		\begin{array}{|lllll|}
			\hline
			\multicolumn{5}{|c|}{\mathcal{T}^{\textcircled{7}}}\\
			\hline
			a &   &   &   & \\
			c & {\color{red}d} & {\color{red}e} &   & \\
			a & b &   &   & \\
			a & b & c & {\color{red}d} & {\color{red}e} \\
			a & b & c & d & \\
			a & b &   &   & \\
			a & d &   &   & \\
			b & c &   &   & \\
			\hline
		\end{array}
		\,\,\,\,\,\,\,\,\,\,\,\,\,\,
		\begin{array}{|llll|}
		\hline
		\multicolumn{4}{|c|}{\mathcal{T}^{\textcircled{8}}}\\
		\hline
		a &   &   & \\
		c & {\color{red}d} & {\color{red}e} & \\
		a & b &   & \\
		a & b & c & d \\
		a & b & c & d \\
		a & b &   & \\
		a & {\color{red}d} & {\color{red}e} & \\
		b & c &   & \\
		\hline
		\end{array}
        \]
        
        \noindent
        \[
        \arraycolsep=1.4pt
        \begin{array}{|llll|}
        \hline
        \multicolumn{4}{|c|}{\mathcal{T}^{\textcircled{9}}}\\
        \hline
        a &   &   & \\
        c & {\color{red}d} & {\color{red}e} & \\
        a & b &   & \\
        a & b & c & d \\
        a & b & c & d \\
        a & b &   & \\
        a & d &   & \\
        b & c & e & \\
        \hline
        \end{array}
        \,\,\,\,\,\,\,\,\,\,\,\,\,\,
        \begin{array}{|lllll|}
	        \hline
	        \multicolumn{5}{|c|}{\mathcal{T}^{\textcircled{10}}}\\
	        \hline
	        a &   &   &   & \\
	        c & d &   &   & \\
	        a & b & e &   & \\
	        a & b & c & {\color{red}d} & {\color{red}e} \\
	        a & b & c & d & \\
	        a & b &   &   & \\
	        a & d &   &   & \\
	        b & c &   &   & \\
	        \hline
        \end{array}        
        \,\,\,\,\,\,\,\,\,\,\,\,\,\,
		\begin{array}{|llll|}
			\hline
			\multicolumn{4}{|c|}{\mathcal{T}^{\textcircled{11}}}\\
			\hline
			a &   &   & \\
			c & d &   & \\
			a & b & e & \\
			a & b & c & d \\
			a & b & c & d \\
			a & b &   & \\
			a & {\color{red}d} & {\color{red}e} & \\
			b & c &   & \\
			\hline
		\end{array}       
        \,\,\,\,\,\,\,\,\,\,\,\,\,\,
		\begin{array}{|llll|}
			\hline
			\multicolumn{4}{|c|}{\mathcal{T}^{\textcircled{12}}}\\
			\hline
			a &   &   & \\
			c & d &   & \\
			a & b & e & \\
			a & b & c & d \\
			a & b & c & d \\
			a & b &   & \\
			a & d &   & \\
			b & c & e & \\
			\hline
		\end{array}        
        \,\,\,\,\,\,\,\,\,\,\,\,\,\,
		\begin{array}{|lllll|}
			\hline
			\multicolumn{5}{|c|}{\mathcal{T}^{\textcircled{13}}}\\
			\hline
			a &   &   &   & \\
			c & d &   &   & \\
			a & b &   &   & \\
			a & b & c & {\color{red}d} & {\color{red}e} \\
			a & b & c & d & \\
			a & b &   &   & \\
			a & {\color{red}d} & {\color{red}e} &   & \\
			b & c &   &   & \\
			\hline
		\end{array}         
        \,\,\,\,\,\,\,\,\,\,\,\,\,\,
		\begin{array}{|lllll|}
			\hline
			\multicolumn{5}{|c|}{\mathcal{T}^{\textcircled{14}}}\\
			\hline
			a &   &   &   & \\
			c & d &   &   & \\
			a & b &   &   & \\
			a & b & c & {\color{red}d} & {\color{red}e} \\
			a & b & c & d & \\
			a & b &   &   & \\
			a & d &   &   & \\
			b & c & e &   & \\
			\hline
		\end{array} 
		        \,\,\,\,\,\,\,\,\,\,\,\,\,\,
		\begin{array}{|llll|}
		\hline
		\multicolumn{4}{|c|}{\mathcal{T}^{\textcircled{15}}}\\
		\hline
		a &   &   & \\
		c & d &   & \\
		a & b &   &  \\
		a & b & c & d \\
		a & b & c & d \\
		a & b &   & \\
		a & {\color{red}d} & {\color{red}e} & \\
		b & c & e & \\
		\hline
		\end{array}          
        \]
        \end{center}
      \end{minipage}
        \label{fig:covpb:preimages1}
    }
    \end{scriptsize}
  \end{center}
	\caption{Eliminating a cover problem with partial suppression}
	\label{fig:partialsupp}
\end{figure}

% Dans l'exemple suivant faire le lien avec les capteurs de localisation.

To illustrate how a cover problem can be eliminated through partial suppression, let us consider the example in Figure \ref{fig:partsuppex}. Both items $c$ and $d$ are covered items in the disassociated dataset $\mathcal{T}^*$. After applying partial suppression, in Figure \ref{fig:partsuppex}, two subsets $I_{1} = \{a,b\}$ and $I_{2}=\{c,d\}$ are created after randomly partitioning $I = \{a,b,c,d\}$. From $I_{1}$ and $I_{2}$, two ghost records are created $L_{1} =\{a,d\}$ and $L_{2}=\{b,c\}$, containing one of the items from each suppressed subsets. Next, these two ghost records are added to $R_{C_{1}}$. We illustrate, in Figure \ref{fig:covpb:preimages1}, all possible reconstructed datasets of the final disassociated dataset. Now, if an attacker knows that a specific individual has searched for items $\{d,e\}$ (considered as the attacker's background knowledge), he/she will be able to associate them with three possible records $\{c,d,e\}$, $\{a,d,e\}$ and $\{a,b,c,d,e\}$. While these extra associations are considered noise, for the sake of privacy, they added ambiguity to the result since the attacker cannot link $\{d,e\}$ to a particular record. %Contrary to the original result where the attacker knew that an individual searched for itemset $\{d,e\}$, he/she now cannot deduce with certainty the exact set of all items related to that individual.  
In the next section, we will study and evaluate the impact of partial suppression on the utility of the dataset.

	\section{Experimental Evaluation} \label{sec:exper}
	
In keeping with the previous work~\cite{DBLP:conf/secrypt/BarakatBNG16}, we elaborate our experiments on two datasets, the \textit{BMS}1 and the \textit{BMS}2, which contain click-stream E-commerce data. %This choice will reflect, through the experiments, the privacy breach due to disassociation in relatively small ($\textit{BMS}1$) and big ($\textit{BMS}2$) datasets, as well as the effect of our algorithm on the them. \\
Table \ref{tab:datasets} shows the properties of the datasets.

The aim of the experiments can be summarized as follows:
\begin{itemize}
	\item Evaluating the privacy breach in the disassociated dataset.
	\item Evaluating the utility loss w.r.t the suppression of items (or their occurrences).
	\item Studing the loss in associations when disassociating and safely disassociating a dataset. 
	\item Evaluating the performance of partial suppression.
\end{itemize}
 
\begin{table}
	\centering
	\small
	\caption{Datasets properties}
	\begin{tabular}{|l|l|l|l|p{3cm}|}\hline
		Dataset & \# of distinct individuals & \# of distinct items & count of items' occurrences\\ \hline
		\textit{BMS}1 & 59602 & 497 & 149639 \\ \hline
		\textit{BMS}2 & 77512 & 3340 & 358278 \\ \hline
	\end{tabular}
	\label{tab:datasets}
\end{table}

\subsection{Privacy and utility metrics}

\subsubsection{Privacy Evaluation Metric (PEM)} 
represents the number of vulnerable record chunks in a disassociated dataset. In fact, we consider that every record chunk that is subject to a cover problem is a vulnerable record chunk. We formally define our $PEM$ as:
\begin{center}
$PEM =\frac {vRC}{RC}$
\end{center}

where,
 \begin{itemize}
 	\item $vRC$ represents the number of vulnerable record chunks in $\mathcal{T}$, the disassociated dataset, and 
 	\item $RC$ represents the total number of record chunks in $\mathcal{T}$.
 \end{itemize}

\subsubsection{Relative Loss Metric (RLM)}
determines the relative number of suppressed occurrences of items with partial suppression. Formally,
\begin{displaymath}
RLM =\frac {\sum_{\forall x \in \mathcal{D} } ( {s(\{x\},\mathcal{T}^*) - s(\{x\},\mathcal{T}' )} )}{\sum_{\forall x \in \mathcal{D} } ({s(\{x\},\mathcal{T}^*)})}
\end{displaymath}
where, 
\begin{itemize}
	\item  $s(\{x\},\mathcal{T}^*)$ represents the support of the item $x$ in the disassociated dataset $\mathcal{T}^*$, and 
	\item  $s(\{x\},\mathcal{T}')$ represents the support of the item $x$ in the safely disassociated dataset  $\mathcal{T}'$.
\end{itemize}

\subsubsection{Relative Association Error (RAE)}
evaluates how likely two items remain associated together in an anonymized dataset \cite{terrovitis2012}. 
In fact, using $RAE$, we are able to evaluate the information loss due to the anonymization of the dataset (whether it is disassociated or safely disassociated). Formally, $RAE$ is defined as follows:

\[RAE =  \frac {s(\{x,y\},\mathcal{T}) - s(\{x,y\},[\mathcal{T}^*|\mathcal{T}'] )}{AVG(s(\{x,y\},\mathcal{T}),s(\{x,y\},[\mathcal{T}^*|\mathcal{T}'] ))}\]
where, 
\begin{itemize}
	\item $s(\{x,y\},\mathcal{T})$ represents the support of items $\{x,y\}$ in the original dataset $\mathcal{T}$, the disassociated dataset $\mathcal{T}^*$, or the safely disassociated dataset $\mathcal{T}'$.
\end{itemize}

\subsection{Experimental results} \label{sec:experimentalresults}
In this section, we present the results of the conducted experiments, evaluating privacy and data utility over the safely disassociated dataset.\\
%Throughout the experiments, we vary the maximum cluster size $\delta$ between 10 and 60, increased by 10, to evaluate different disassociations for the same dataset.
\subsubsection{Evaluating the privacy breach}
We consider that a privacy breach occurs if an attacker is able to link $m$ items, which he/she already knows about an individual, to less than $k$ records in all the datasets reconstructed by the inverse transformation. In this test, we study the impact of the cover problem on the privacy of the disassociated dataset. In fact,  we consider that a potential privacy breach exists whenever a cover problem is identified in a record chunk regardless of the background knowledge of the attacker. It is typically a strong attacker \cite{DBLP:conf/secrypt/BarakatBNG16} who is able to link any two items to a specific individual. Hence, we determine the following: 
\begin{itemize}
	\item the relationship between the $PEM$ and both, $k$ and $m$.
	\item the relationship between the $PEM$ and the maximum cluster size $\delta$.
\end{itemize}

\begin{description}
	\item[Varying $k$ and $m$: ] we vary $k$ and $m$ from 2 to 6. For each value, we compute the $PEM$ to evaluate how the privacy constraint remains satisfied in the record chunks. Figure~\ref{fig:PBeval1} shows the results of the evaluation. When $k$ increases in both datasets $\textit{BMS}1$ and $\textit{BMS}2$, the $PEM$ decreases from $46\%$ to $30\%$ in $\textit{BMS}1$ and from $56\%$ to $39\%$ in $\textit{BMS}2$. While varying $k$ affects the $PEM$, varying $m$ has no noticeable impact on the $PEM$ in both datasets. This is not surprising since, due to the cover problem, any two items known to the attacker can lead to a privacy breach. 
	\\
	\begin{figure}
		\centering
		\begin{tikzpicture}
		\begin{axis}[
		xlabel={$k$},
		ylabel={PEM (\%)},
		grid = major,
		grid style = {dashed, gray!30},
		width=6.5cm, height=3.5cm, 
		legend entries = {\textit{BMS}2,\textit{BMS}1},
		legend style={font=\tiny, at={(1.35,1)}},	  	
		]
		\addplot+[restrict expr to domain={\coordindex}{34:38},blue,mark options={fill=blue}] table [col sep=comma,x=k, y=perc RC affected]{resultsgraph};
		\addplot+[restrict expr to domain={\coordindex}{22:26},red,mark options={fill=red}] table [col sep=comma,x=k, y=perc RC affected]{resultsgraph};
		\end{axis}
		\label{fig:PLMeval:1}
		\end{tikzpicture}
		\centering
		\begin{tikzpicture}
		\begin{axis}[
		xlabel={$m$},
		ylabel={PEM (\%)},
		grid = major,
		grid style = {dashed, gray!30},
		width=6.5cm, height=3.5cm, 
		legend entries = {\textit{BMS}2,\textit{BMS}1},
		legend style={font=\tiny, at={(1.35,1)}},	  	
		]
		\addplot+[restrict expr to domain={\coordindex}{39:45},blue,mark options={fill=blue}] table [col sep=comma,x=m, y=perc RC affected]{resultsgraph};
		\addplot+[restrict expr to domain={\coordindex}{27:31},red,mark options={fill=red}] table [col sep=comma,x=m, y=perc RC affected]{resultsgraph};
		\end{axis}
		\label{fig:PLMeval:2}
		\end{tikzpicture}
		\caption{Evaluating the $PEM$ while varying $k$ and $m$} 
		\label{fig:PBeval1} 
	\end{figure}
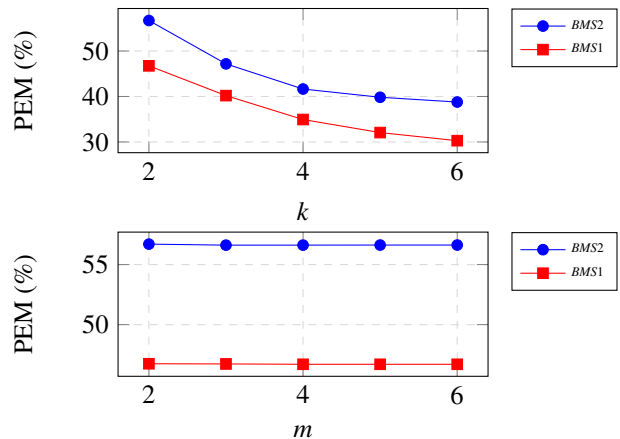
\item[Varying $\delta$: ] 
we vary $\delta$ from $10$ to $60$. For each value, we compute the $PEM$ to evaluate how the privacy constraint remains satisfied in the record chunks. The results in Figure \ref{fig:PBeval} show that the $PEM$ increases from $17\%$ to $52\%$ in $BMS1$ and from $26\%$ to $57\%$ in $BMS2$. This means that with more records in the cluster, the higher the chances are to compromise the dataset due to the cover problem. 
% results reflect the seriousness of the privacy breach occurring in a disassociated dataset due to the cover problem, where a high percentage of the record chunks are vulnerable, which puts associations that are intended to be hidden through the disassociation in danger. We notice that the the privacy loss metric $PLM$ is always higher in $\textit{BMS}2$ compared to $\textit{BMS}1$ through the different $\delta$, which means that the probability of finding vulnerable record chunks $vRC$ increases with the size of the datasets.
%On another hand, for both datasets, $vRC$ increases roughly with the increase of the maximum cluster size, $\delta$, from 10 to 60. This means that the ability to add more records to a cluster, would result in less record chunks achieving easier the $k^m-anonymity$ between associations by finding them more frequent in the cluster in question, putting the cover problem right up front. Therefore, better privacy preservation can be achieved with lower values of the predefined maximum cluster size.
%From these facts we can infer that treating the cover problem in any type of dataset is crucial. In the next experiments, we chose to vary $\delta$ to highlight its impact on that disassociation and analyze the results.
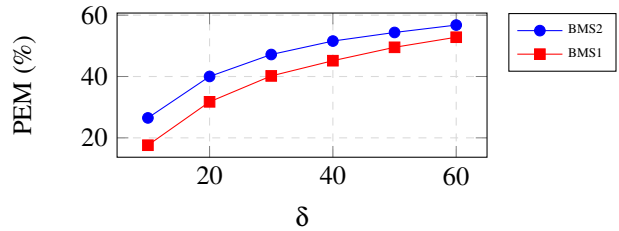
\begin{figure}[H]
	\centering
	\begin{tikzpicture}
	\begin{axis}[
	xlabel={$\delta$},
	ylabel={PEM (\%)},
	grid = major,
	grid style = {dashed, gray!30}, 
	width=6.5cm, height=3.5cm,  
	legend entries = {\textit \textit{BMS}2,\textit \textit{BMS}1},
	legend style={font=\tiny, at={(1.35,1)}},	
	]
	\addplot+[restrict expr to domain={\coordindex}{7:14},blue,mark options={fill=blue}] table [col sep=comma,x=MaxClusterSize, y=perc RC affected]{resultsgraph};
	\addplot+[restrict expr to domain={\coordindex}{0:5},red,mark options={fill=red}] table [col sep=comma,x=MaxClusterSize, y=perc RC affected]{resultsgraph};
	\end{axis}
	\label{fig:PLMeval:3}
	\end{tikzpicture}
	\caption{Evaluating the $PEM$ while varying $\delta$} 
	\label{fig:PBeval} 
\end{figure}

\end{description}
Now, given that the privacy breach is directly related to the value of $\delta$, we choose to vary $\delta$ and fix the values of $k$ and $m$ in the remaining tests. We use $k=3$ to keep computational time to a minimum and $m=2$ since only two items are sufficient to raise a privacy breach.
\subsubsection{Evaluating the number of suppressed items} 
In this test, we evaluate the number of items to be suppressed to safely disassociate a dataset. We note that the record chunks that do not comply with preconditions  \eqref{eq:pre:preservingmaxclustersize} and \eqref{eq:pre:preservingkmanonymity} in partial suppression, for the sake of privacy, their items are completely suppressed. We use the $RLM$ to estimate this loss of items. We vary the maximum cluster size $\delta$ from $10$ to $60$ and compute, for each value, the $RLM$. Figure \ref{fig:ILeval} shows the results of our evaluation. It is not surprising that $\textit{BMS}2$ is more susceptible to partial suppression since it is more vulnerable to the cover problem due to its size, as noted in the previous test (Figure \ref{fig:PBeval}).
In addition, in both datasets, when the maximum cluster size $\delta$ increases, we notice an increase in the $RLM$. This is actually consistent with the fact that the more vulnerable the record chunks are, the more items will be suppressed. Overall, the $20\%$ suppressions of items represent an acceptable trade-off between privacy and utility,  especially that, with safe disassociation, we release real datasets without modifying the items.
 
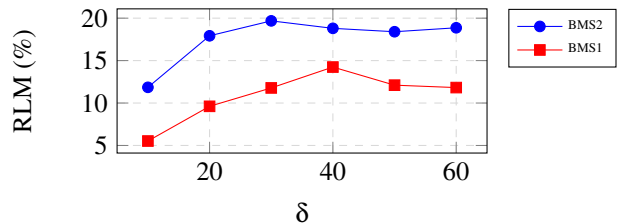
\begin{figure}[ht]
	\centering
	\begin{tikzpicture}
	\begin{axis}[
	xlabel={$\delta$},
	ylabel={RLM (\%)},
	grid = major,
	grid style = {dashed, gray!30}, 
	width=6.5cm, height=3.5cm,  
	legend entries = {\textit \textit{BMS}2,\textit \textit{BMS}1},
	legend style={font=\tiny, at={(1.35,1)}},	
	]
	\addplot+[restrict expr to domain={\coordindex}{7:14},blue,mark options={fill=blue}] table [col sep=comma,x=MaxClusterSize, y=percOccurencesLost]{resultsgraph};
	\addplot+[restrict expr to domain={\coordindex}{0:5},red,mark options={fill=red}] table [col sep=comma,x=MaxClusterSize, y=percOccurencesLost]{resultsgraph};
	\end{axis}
	\end{tikzpicture}
	\caption{Evaluating the $RLM$ while varying $\delta$} 
	\label{fig:ILeval} 
\end{figure}

\subsubsection{Evaluating the association error} 
In this test, we evaluate the error in associations, which is the result of dividing the records into chunks after the partitioning process. Eventually, some of the itemsets will be separated and their items will be stored into different chunks. This adds noise to the dataset since the support of associations between these separated items might be higher in the reconstructed datasets. This leads to this error in associations that can be calculated using the $RAE$, which is the relative difference between the support of the association of an itemset in the original dataset and the anonymized dataset \cite{anatomy}.
Again, we vary the maximum cluster size $\delta$ from $10$ to $60$ and compute, for each value, the $RAE$ on disassociated and safely disassociated datasets. The results in Figure \ref{fig:REeval} shows that the $RAE$ is higher in a safely disassociated dataset. It is not surprising because, with safe disassociation, we suppress some items to prevent the privacy breach. However, the difference between $RAE$ in safe disassociation and disassociation remains acceptable; varying between $1\%$ for $\textit{BMS}1$ and $0.1\%$ for $\textit{BMS}2$.

\begin{figure}[htt]
	\begin{tikzpicture}
	\begin{axis}[
	xlabel={$\delta$},
	ylabel={RAE},
	grid = major,
	grid style = {dashed, gray!30},
	width=6.5cm, height=3.5cm, 
	legend entries = {\textit{\textit{BMS}1} with disassociation,\textit{\textit{BMS}1} with safe disassociation},
	legend style={font=\tiny, at={(1.855,1)}},	  	
	]
	\addplot+[restrict expr to domain={\coordindex}{0:5},blue,mark options={fill=blue}] table [col sep=comma,x=MaxClusterSize, y=RE]{resultsgraph};
	\addplot+[restrict expr to domain={\coordindex}{0:5},red,mark options={fill=red}] table [col sep=comma,x=MaxClusterSize, y=RE Safe Diss]{resultsgraph};
	\end{axis}
	\end{tikzpicture}
	\centering
	\begin{tikzpicture}
	\begin{axis}[
	xlabel={$\delta$},
	ylabel={RAE},
	grid = major,
	grid style = {dashed, gray!30},
	width=6.5cm, height=3.5cm, 
	legend entries = {\textit{\textit{BMS}2} with disassociation,\textit{\textit{BMS}2} with safe disassociation},
	legend style={font=\tiny, at={(1.855,1)}},	  	
	]
	\addplot+[restrict expr to domain={\coordindex}{7:14},red,mark options={fill=red}] table [col sep=comma,x=MaxClusterSize, y=RE Safe Diss]{resultsgraph};
	\addplot+[restrict expr to domain={\coordindex}{7:14},blue,mark options={fill=blue}] table [col sep=comma,x=MaxClusterSize, y=RE]{resultsgraph};
	\end{axis}
	\label{fig:REeval:2}
	\end{tikzpicture}
	\caption{Evaluating the $RAE$ while varying $\delta$} 
	\label{fig:REeval} 
\end{figure}
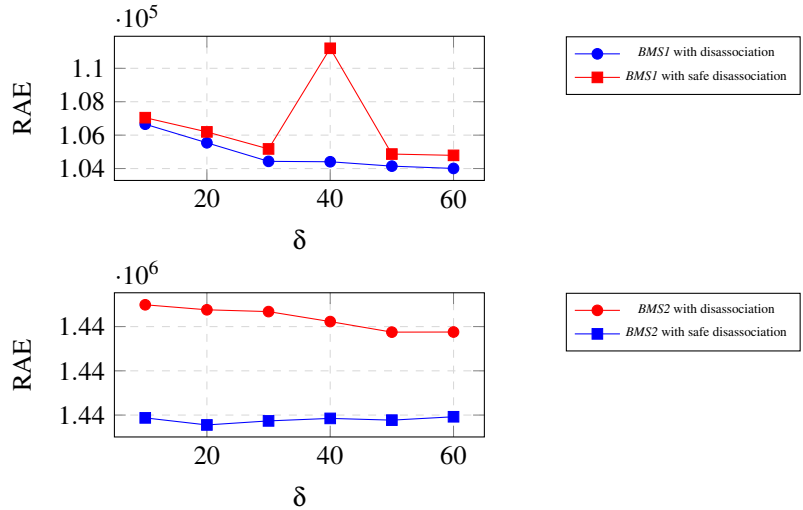

\subsubsection{Performance evaluation}
We compare the performance of our algorithm with different dataset size. The datasets used in this experiment are formed from \textit{BMS}1 and \textit{BMS}2. We vary, in the $x-axis$, the number of records with \{358k, 149K, 100K, 80K, 50k, 30k 10k\}; and the maximum cluster size $\delta$ is fixed to 30. Figure \ref{fig:perform} shows an increase in the run time with the increase of the dataset size.

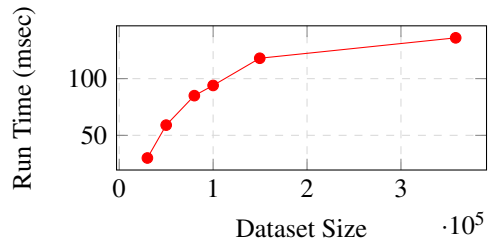
\begin{figure}[ht]
	\centering
	\begin{tikzpicture}
	\begin{axis}[
	xlabel={Dataset Size},
	ylabel={Run Time (msec)},
	grid = major,
	grid style = {dashed, gray!30},
	width=6.5cm, height=3.5cm,	  	
	]
	\addplot+[restrict expr to domain={\coordindex}{15:21},red,mark options={fill=red}] table [col sep=comma,x=DatasetSize, y=RunTime msec]{resultsgraph};
	\end{axis}
	\end{tikzpicture}
	\caption{Performance evaluation} 
	\label{fig:perform} 
\end{figure}

%To summarize, to keep an acceptable trade-off between privacy and utility, it is better to choose a low $\delta$ for disassociation. In addition and despite the high privacy breaches $PLM$ found and the information loss $ALM$, the relative association error $RAE$ were negligible for the different datasets. 

	\section{Conclusion} \label{sec:conclusion}
	
Disassociation is an interesting anonymization technique that is able to hide the link between individuals and their complete set of items while keeping the items without generalization. In a previous work \cite{DBLP:conf/secrypt/BarakatBNG16}, disassociation was considered vulnerable due to a cover problem. Basically, it is the ability to associate one-to-one or one-to-many items in two subsequent record chunks of the disassociated data. In this paper, we propose safe disassociation to solve this problem. We use partial suppression to achieve safe disassociation by suppressing some of the items that lead to a cover problem from subsequent record chunks. In the experiments, we show that the vulnerability of a disassociated dataset depends on the size of the cluster. We evaluate the utility of a safely disassociated dataset in terms of 1) the number of items to be suppressed to achieve safe disassociation, and 2) the additional noisy associations added due to item suppression and partitioning. The results of our evaluations showed that an acceptable trade-off between privacy and utility is met. 
%Mainly, partial suppression executes in two phases. First, from $I$ we construct sub-itemsets of cardinality equal to 2, and suppress them from distinct records representing $I$. Second, each item from the constructed sub-itemsets is added to one of the two ghost records, that are added to the record chunk being treated.
%Finally, we led a set of experiments, to test the privacy and the utility of the data after achieving the covering free disassociation. The experiments reflected interesting results. On the one hand, privacy is to be considered seriously after disassociation, where the ratio of vulnerable record chunks is high; but on the other hand, the proposed solution for the cover problem showed its efficiency on the level of data utility, where the association loss due to partial suppression is negligible. 

In future works, we aim at maximizing the utility of a safely disassociated dataset by modifying the clustering algorithm to keep user-defined itemsets associated together. 
	
	\section*{Acknowledgments}\label{sec:Acknowledgments}
	This work is funded by the InMobiles company\footnote{\url{www.inmobiles.net}} and the Labex ACTION program (contract ANR-11-LABX-01-01). Computations have been performed on the supercomputer facilities of the M\'{e}socentre de calcul de Franche-Comt\'e. Special thanks to  Ms. Sara Barakat for her contribution in identifying the cover problem.
\newcommand{\etalchar}[1]{$^{#1}$}

\bibliographystyle{alpha}
%\bibliography{references}

\end{document}